\newtheorem{theorem}{Theorem}
\newtheorem{lemma}{Lemma}
\newcommand*\patchAmsMathEnvironmentForLineno[1]{%
      \expandafter\let\csname old#1\expandafter\endcsname\csname #1\endcsname
      \expandafter\let\csname oldend#1\expandafter\endcsname\csname end#1\endcsname
      \renewenvironment{#1}%
         {\linenomath\csname old#1\endcsname}%
         {\csname oldend#1\endcsname\endlinenomath}}%
    \newcommand*\patchBothAmsMathEnvironmentsForLineno[1]{%
      \patchAmsMathEnvironmentForLineno{#1}%
      \patchAmsMathEnvironmentForLineno{#1*}}%
\def\dispmuskip{\thinmuskip= 3mu plus 0mu minus 2mu \medmuskip=  4mu plus 2mu minus 2mu \thickmuskip=5mu plus 5mu minus 2mu}
\def\textmuskip{\thinmuskip= 0mu                    \medmuskip=  1mu plus 1mu minus 1mu \thickmuskip=2mu plus 3mu minus 1mu}
\def\beq{\dispmuskip\begin{equation}}    \def\eeq{\end{equation}\textmuskip}
\def\beqn{\dispmuskip\begin{displaymath}}\def\eeqn{\end{displaymath}\textmuskip}
\def\bea{\dispmuskip\begin{eqnarray}}    \def\eea{\end{eqnarray}\textmuskip}
\def\bean{\dispmuskip\begin{eqnarray*}}  \def\eean{\end{eqnarray*}\textmuskip}
\newcounter{mntcomm}
\newcounter{alcomm}
\newcommand{\wh}{\widehat}
\def\E{{\mathbb E}}                         % Expectation
\def\l{\lambda}
\def\M{{\cal M}}
\def\cov{\text{\rm Cov}}
\begin{document}

\title{Quantum Speedup of Natural Gradient for Variational Bayes}
\thanks{\textit{The research was partially supported by the Australian Research Council's Discovery Project DP200103015 and the Australian Centre of Excellence for Mathematical and Statistical Frontiers (ACEMS).}}

\author{Anna Lopatnikova}
 \email{anna.lopatnikova@sydney.edu.au}

\author{Minh-Ngoc Tran}
 \email{minh-ngoc.tran@sydney.edu.au}

\affiliation{%
 Discipline of Business Analytics, The University of Sydney Business School, The University of Sydney, NSW 2006, Australia
}%

\date{\today}

\begin{abstract}
    Variational Bayes (VB) is a critical method in machine learning and statistics, underpinning the recent success of Bayesian deep learning.  The natural gradient is an essential component of efficient VB estimation, but it is prohibitively computationally expensive in high dimensions.   We propose a computationally efficient regression-based method for natural gradient estimation, with convergence guarantees under standard assumptions.  The method enables the use of quantum matrix inversion to further speed up VB.  We demonstrate that the problem setup fulfills the conditions required for quantum matrix inversion to deliver computational efficiency. The method works with a broad range of statistical models and does not require special-purpose or simplified variational distributions.   \\
    
    \noindent\textbf{Keywords.} Quantum machine learning, quantum algorithm, Bayesian computation.
\end{abstract}

\maketitle

%\doublespacing

\section{Introduction}

Our work demonstrates how a general quantum computer could help unlock the next generation of deep learning applications.   We focus on Variational Bayes (VB), a critical method which has underpinned the recent surge in successful industrial and research applications of Bayesian deep learning; see \cite{Hoffman:2013,Kingma.Welling:VAE,zhang2018advances} and references therein.   Even though VB is efficient and scalable relative to alternative methods, it remains too computationally intensive for many practical applications, particularly when the number of variational parameters $N$ is large (e.g.~millions or more). 

We propose a quantum-classical algorithm to speed up VB through efficient computation of natural gradient.   The use of natural gradient is seen as one of the most promising ways to speed up VB \cite{amari1998natural,Sato:2001,Tran:2017,Mishkin:2018}.  However, this method requires the computation and inversion of the Fisher information matrix  \cite{martens2014new,martens2015,Tran2020JCGS}, which, for a model with a million parameters, would require a classical computer to perform, at best, quintillions of operations -- $\tilde{O}(N^d)$ with $2<d\leq 3$ (Section~\ref{sec:vb_with_ng}).

To make the problem of natural gradient estimation suitable for quantum speedup, we reformulate it into a linear regression problem that exploits the structure of the Fisher matrix (Section~\ref{sec:regression}).  We show that the resulting problem could be one of the best practical uses of the quantum linear systems algorithms \cite{harrow2009quantum,ambainis2012variable,clader2013preconditioned,kerenidis2016quantum,wossnig2018quantum,childs2017quantum,subacsi2019quantum,gilyen2019quantum}. These algorithms offer dramatic speedups but have stringent requirements, rarely met by real-world problems \cite{clader2013preconditioned,aaronson2015read}.  We show that the problem of natural gradient estimation for VB meets the critical requirements, enabling speedup to the classical limit of $O(N)$ needed to write down all the variational parameters (Section~\ref{sec:QVB}).  Additionally, the linear regression formulation of natural gradient estimation enables efficient classical computation with time complexity $\tilde{O}(N^b)$ with $1 \leq b < 2$ (Section~\ref{sec:efficiency_nat_grad}).  We demonstrate the method’s power in a classical simulation where the VB approximation is a computationally intensive neural network  (Section~\ref{sec: numerical examples}).

%====================================================%
\section{A Regression-Based Approach to Natural Gradient Estimation for Variational Bayes}\label{sec: classical VB}
%====================================================%

In this section, we provide a review of VB that uses natural gradient optimization, then demonstrate the regression-based approach and prove its convergence.

\subsection{Variational Bayes with Natural Gradient}
\label{sec:vb_with_ng}

One of the most promising classical advances in VB is the use of natural gradient instead of Euclidean gradient in stochastic gradient descent
\citep{amari1998natural,Sato:2001,Tran:2017,Mishkin:2018}.  Being a geometric object that takes into account the information geometry of the variational family, natural gradient often leads to faster and more stable convergence than alternative methods \cite{martens2014new}. In situations where 
variational parameters lie on Riemannian manifolds, natural gradient is the only reliable adaptive learning method that  speeds up VB training \cite{Tran:2019VB_manifold}.
However, in many cases, accurate computation of natural gradient is infeasible, because it requires the {\it analytic} computation and {\it inversion} of the Fisher information matrix.  With computational complexity of $\tilde{O}(N^d)$, with $2<d\leq 3$ depending on various algorithms ($d=3$ in practice), natural gradient estimation is prohibitively computationally expensive in high-dimensional cases; current practice resorts to heuristic workarounds that can affect the results of Bayesian inference \cite{martens2015}.  We propose a regression-based approach to natural gradient estimation for VB, which lends itself to efficient computational methods, including quantum computing algorithms (Section~\ref{sec:QVB}).

Let $y$ be the data and $L(\theta)=p(y|\theta)$ the likelihood function based on a statistical model, with $\theta$ the set of model parameters to be estimated.
Let $p(\theta)$ be the prior. 
Bayesian inference encodes all information into the posterior distribution
\[p(\theta|y)=\frac{p(\theta)L(\theta)}{p(y)}.\]
The main task in Bayesian inference is to approximate
the posterior $p(\theta|y)$, and VB does so by approximating it by a probability distribution with density $q_\lambda(\theta)$, $\lambda\in\M$ - the variational parameter space, belonging to some tractable family of distributions such as Gaussians or neural networks. Denote by $N$ the size of $\lambda$. The best $\lambda$ is found by maximizing the lower bound \citep{zhang2018advances}
\begin{equation}\label{eq: lower bound}
\mathcal L(\lambda)=\int q_\lambda(\theta)\log\frac{p(\theta)L(\theta)}{q_\lambda(\theta)}d\theta=\E_{q_\lambda}\big(h_\lambda(\theta)\big),
\end{equation}
with
\begin{equation}\label{eq: h_lambda function}
h_\lambda(\theta)=\log\big(p(\theta)L(\theta)\big)-\log q_\lambda(\theta).
\end{equation}
The Euclidean gradient of the lower bound is \citep{tran2021VBtutorial}
\begin{equation}\label{eq: LB gradient}
\nabla_\lambda\mathcal L(\lambda) = \E_{q_\lambda}\big(\nabla_\lambda\log q_\lambda(\theta)\times h_\lambda(\theta)\big).
\end{equation}
For maximizing the lower bound objective function $\mathcal L(\lambda)$, Amari \cite{amari1998natural} shows that the {\it natural gradient}, defined as
\begin{equation}\label{eq:natural gradient solution}
\nabla_{\lambda}^{\text{nat}}\mathcal L(\lambda):=I^{-1}_F(\lambda)\nabla_{\lambda}\mathcal L(\lambda),
\end{equation}
where 
\begin{equation}\label{eq:I_F}
I_F(\lambda):=\text{Cov}_{q_\lambda}\big(\nabla_\lambda\log q_\lambda(\theta),\nabla_\lambda\log q_\lambda(\theta)\big)
\end{equation}
is the Fisher information matrix of $q_\lambda$, works much more efficiently than the Euclidean gradient.
The reason is that the natural gradient takes into account the geometry of the variational family $\{q_\lambda, \lambda\in\M\}$ in the optimization.
Natural gradient has been proven vital to the success of VB; see, e.g., Refs.~\cite{Sato:2001,Hoffman:2013} among others.

%\subsubsection*{Classical VB algorithm}
The classical VB algorithm with the use of natural gradient proceeds as follows (Algorithm~\ref{Al:algorithm 1}). Let $\wh{\nabla_\l^\text{nat}{\mathcal L}}(\l)$ be an estimate of the natural gradient ${\nabla_\l^\text{nat}{\mathcal L}}(\l)$.  At step $t$ of stochastic natural gradient descent, the natural gradient estimate is used to compute the momentum gradient $Y_{t+1}$, which in turn updates the variational parameter $\l_{t+1}=\l_t+\alpha_tY_{t+1}$.  
\begin{algorithm}[H]
\caption{VB Algorithm with Natural Gradient\label{Al:algorithm 1}}
\vspace{-1.5ex}
\begin{enumerate}[leftmargin=*]
\itemsep0em 
  \item Initialize $\l_0$ and momentum gradient $Y_0$. 
  \item For $t=0,1,...$ and until stopping criterion is met:
\vspace{-1.5ex}
  \begin{enumerate}[a.,leftmargin=*]
  \itemsep0em 
	  %\item Generate $\theta_s\sim q_{\lambda_t}(\theta)$, $s=1,...,S$
	  \item Estimate the natural gradient $\wh{\nabla_\l^\text{nat}{\mathcal L}}(\l_t)$ and apply gradient clipping. 
	  %\[\wh{\nabla_\l{\mathcal L}}(\lambda_{t}):=\frac{1}{S}\sum_{s=1}^S\nabla_\lambda \log q_\lambda(\theta_s)\times h_\lambda(\theta_s)|_{\lambda=\lambda^{(t)}}.\]
	  \item Update the momentum gradient \label{step:momgrad}
	  \begin{align}
	      Y_{t+1}& =\omega Y_t+(1-\omega)\wh{\nabla_\l^\text{nat}{\mathcal L}}(\l_t).\label{eq:mom grad}
	  \end{align} 
	  \item Update the variational parameter
	  \bea\label{eq:update rule}
	  \lambda_{t+1}=\lambda_{t}+\alpha_t Y_{t+1}.
	  \eea
  \end{enumerate}
\end{enumerate}
\end{algorithm}
The momentum gradient method, found highly useful in practice, smooths out the natural gradient estimate as in (\ref{eq:mom grad}) incorporating the latest estimate $\wh{\nabla_\l^\text{nat}{\mathcal L}}(\l_t)$ with the momentum weight $\omega\in(0,1)$. The learning rates $\alpha_t$ are required to satisfy: $1\geq\alpha_t\downarrow 0$, $\sum \alpha_t=\infty$ and $\sum \alpha_t^2<\infty$. We refer the interested reader to Ref.~\cite{tran2021VBtutorial} for more details.   

The current practice of stochastic gradient descent for VB often includes
{\it gradient clipping} that normalizes the length of $\wh{\nabla_\l^\text{nat}{\mathcal L}}(\l_t)$ \cite{Goodfellow-et-al-2016}. 
%This normalization is automatically incorporated into our quantum natural gradient because of the unit norm requirement of quantum states.
%The scaling factor $\kappa$ is a tuning parameter.

The main computational bottleneck in Variational Bayes is computing the gradient estimate $\wh{\nabla_\l^\text{nat}{\mathcal L}}(\l)$. We develop the regression-based natural gradient estimation method to overcome this problem, including with the use of a hybrid quantum-classical algorithm.

\subsection{Estimating Natural Gradient with Regression}
\label{sec:regression}

The obstacles to practical implementation of natural gradient descent are well-known. It is difficult to compute the Fisher matrix let alone its inverse. Even if one could compute the Fisher matrix analytically, solving the equation in \eqref{eq:natural gradient solution} is computationally infeasible in high dimensions. Our regression-based approach to natural gradient estimation overcomes these difficulties. 
%We emphasize that this  regression-based natural gradient method itself is also useful for classical VB.

Denote $T(\theta)=\nabla_\lambda \log q_\lambda(\theta)$, a vector-valued function often called the \emph{score function}.  By noting that $\E_{q_\lambda}(T(\theta))=0$, the Euclidean gradient of the lower bound in \eqref{eq: LB gradient} and the Fisher matrix in \eqref{eq:I_F} can be rewritten as
\begin{align}
&\nabla_\lambda\mathcal L(\lambda) = \cov_{q_\lambda}\big(T(\theta),h_\lambda(\theta)\big),\;\;\;\nonumber\\ &\text{and}\;\;\;
I_F(\lambda) = \cov_{q_\lambda}\big(T(\theta),T(\theta)\big).
\end{align}
Hence, the natural gradient takes the form
\begin{align}
    &\nabla_\l^\text{nat}{\mathcal L}(\l)= I_F^{-1}(\lambda)\nabla_\lambda\mathcal{L}(\lambda)\nonumber \\ 
    &=  \big[\cov_{q_\lambda} (T(\theta),T(\theta))\big]^{-1} \cov_{q_\lambda}\big(T(\theta),h_\lambda(\theta)\big). \label{eq:natgrad_eta}
\end{align}

Equation \eqref{eq:natgrad_eta} invites the interpretation of the natural gradient as the slope vector $\beta$ of regression of the dependent variable $h_\lambda(\theta)$ on the vector  of independent variables $T(\theta)$.  

More concretely, let $\{\theta_i\}_{i=1}^M$ be $M$ samples from $q_\lambda(\theta)$, and consider the linear regression
\beq\label{eq:regression prob}
h_\lambda(\theta_i)=\beta_0+T(\theta_i)^\top\beta+\epsilon_i,
\eeq
where the $\epsilon_i$ are independently distributed with mean zero. We need an intercept $\beta_0$ in the regression model \eqref{eq:regression prob}  because $\E_{q_\l}(T(\theta))=0$. We note that $\beta_0=\E_{q_\lambda}(h_\lambda(\theta))=\mathcal L(\lambda)$; hence, an estimate of $\beta_0$ provides an estimate of the lower bound - an important quantity in the VB literature.
Let $T$ be the design matrix whose row vectors are $(1,T(\theta_i)^\top)$  and $h$ be the response vector of $h_\lambda(\theta_i)$. 
The minimum-norm ordinary least squares estimator of 
$(\beta_0,{\beta}^\top)^\top$ is 
\begin{align}\label{eq: nat grat est}
    g = T^+ h
\end{align}
where $T^+$ denotes the pseudo-inverse of $T$ \cite{Kobak2020JMLR}.
When $M>N$, $g$ is the usual OLS estimator, $g=(T^\top T)^{-1}T^\top h$ and is an {\it unbiased} estimator of $(\beta_0,{\beta}^\top)^\top$.  When $M < N$, $g=T^\top(TT^\top)^{-1}h$ and is the minimum-norm solution to the underconstrained linear equation $Tg = h$,  
and it can be shown that $g$ is a consistent estimator (in $M$) of $(\beta_0,{\beta}^\top)^\top$.

In this paper, we focus mainly on the situation where the number of $\{\theta_i\}$ samples $M$ is much smaller than the number of parameters $N$, $M \ll N$.  This is the situation observed in the vast majority of practical applications of VB in deep learning \cite{liang2019fisher,bartlett2020benign,dar2021farewell}.

Denote by $\wh\beta$ the vector $g$ after deleting its very first component.
We use $\wh\beta$ as an estimator of the natural gradient $\nabla_\l^\text{nat}{\mathcal L}(\l)$.
 This regression-based natural gradient formulation is inspired by the method of 
Salimans and Knowles \cite{salimans2013fixed} and Malago \emph{et al.} \cite{malago2013natural}
who consider a regression-based method for estimating the natural gradient when the variational distribution is limited to the exponential family.
Our method is general and can be easily applied to situations with complicated variational distributions such as neural networks.
Hereafter, for notational simplicity, we will refer to $g$ in \eqref{eq: nat grat est} as the natural gradient estimator $\wh{\nabla_\l^\text{nat}{\mathcal L}}(\l)$, although in practice one must remove the very first component of $g$ to obtain the natural gradient estimate.

\subsection{Computational Efficiency of Regression-Based Natural Gradient}
\label{sec:efficiency_nat_grad}

The regression-based approach provides an estimate of the natural gradient that avoids the requirement to compute the Fisher matrix $I_F$ analytically.  Pseudo-inversion of the short-and-wide $M \times N$, $M\ll N$ design matrix $T$ can be efficient.  The time complexity of the popular Cholesky decomposition method is  $\tilde{O}(NM^2)$, scaling favorably with $N$, particularly if $M = O(N^\alpha)$ with $0 \leq \alpha < 1/2$, where $\alpha = 0$ denotes a polylogarithmic dependence on $N$.  The method delivers significant efficiency over the naive methods to invert the $N \times N$ Fisher matrix, with time complexity of $\tilde{O}(N^d)$, where $2<d\leq 3$ (and usually $d=3$ in practice). Note that the conjugate gradient method \cite{shewchuk1994introduction,saad2003iterative}, sometimes used as a classical benchmark for quantum linear systems algorithms \cite{clader2013preconditioned}, does not apply to the natural gradient problem.  The conjugate gradient method requires that the inverted matrix is symmetric and positive definite.  The design matrix $T$ is not symmetric and its symmetrized version $\begin{pmatrix} 0 & T \\ T^\dagger & 0 \end{pmatrix}$ is not positive definite.

Theorem \ref{the: main theorem} in Section \ref{sec:Convergence} provides a convergence guarantee for VB optimization with natural gradient (Algorithm \ref{Al:algorithm 1}) under standard conditions.

\subsection{Convergence of VB with Regression-Based Natural Gradient}\label{sec:Convergence}
This section provides a convergence analysis of the VB algorithm using the regression-based natural gradient described in the previous sections. To be consistent with the conventional notation in the optimization literature, let us denote $\tilde{\mathcal L}(\lambda):=-{\mathcal L}(\lambda)$ to be the objective function, the VB optimization problem becomes
\[\min_{\lambda}\tilde{\mathcal L}(\lambda).\]
Write the estimate $\widehat{\nabla_\lambda^\text{nat}\tilde{\mathcal L}}(\lambda_t)$ of the natural gradient as
\beq\label{eq:classical NG}
\widehat{\nabla_\lambda^\text{nat}\tilde{\mathcal L}}(\lambda_t)=\nabla_\lambda^\text{nat}\tilde{\mathcal L}(\lambda_t)+\epsilon_t
\eeq
with $\epsilon_t$ the estimate error.
In Algorithm \ref{Al:algorithm 1}, by viewing that 
the variational parameter $\lambda$ is updated first then the momentum gradient, it can be equivalently rewritten as: 

For $t=0,1,...$
\begin{itemize}
\item $\lambda_{t+1}=\lambda_t-Y_t$.
\item $Y_{t+1}=\zeta_{t+1}  Y_t+\gamma_{t+1}(\nabla_\lambda^\text{nat}\tilde{\mathcal L}(\lambda_{t+1})+\epsilon_{t+1})$.
\end{itemize}
Here, $\zeta_t=\alpha_t\omega$ and $\gamma_t=\alpha_t(1-\omega)$ with $\omega$ the momentum weight and $\alpha_t$ the learning rates.
For a symmetric and positive definite matrix $\Sigma$, we denote by $\sigma_{\max}(\Sigma)$, $\sigma_{\min}(\Sigma)$ the largest and smallest eigenvalue of $\Sigma$, and by $\|\cdot\|$ the $\ell^2$-norm.

\begin{theorem}\label{the: main theorem}
Assume that the learning rates satisfy the conditions that $1\geq \alpha_0\geq \alpha_1\geq\cdots>0$, $\sum\alpha_t=\infty$ and $\sum\alpha_t^2<\infty$. Furthermore, assume that the following regularity conditions are satisfied in a neighborhood $\mathcal N$ in the  $\lambda$-space:
\begin{itemize}
    \item[(A1)] The ordinary gradient is bounded, i.e. $\|\nabla\tilde{\mathcal L}(\lambda)\|\leq b_1<\infty$.
    \item[(A2)] $0<b_2\leq\sigma_{\min}(I_F(\lambda))\leq\sigma_{\max}(I_F(\lambda))\leq b_3<\infty$. 
    \item[(A3)] The estimate errors are such that $\E(\|\epsilon_t\|)<b_4<\infty$ for all $t$, and $\sum_t\alpha_t\E\|\epsilon_t\|<\infty$.
\end{itemize}
Let $\{\lambda_t,t=0,1,...\}$ be the iterates generated from Algorithm \ref{Al:algorithm 1}.
Then, if $\lambda_0\in \mathcal N$,
\begin{equation}\label{eq:result 1}
\min_{t=1,...,n}\E\|\nabla\tilde{\mathcal L}(\lambda_t)\|\to 0,\;\;\;n\to\infty.
\end{equation}
Furthermore, if $\tilde{\mathcal L}(\lambda)$ is $L$-strongly convex on $\mathcal N$ with the minimum point $\lambda^*$, i.e.,
\[\tilde{\mathcal L}(\lambda)\geq \tilde{\mathcal L}(\eta)+\nabla \tilde{\mathcal L}(\eta)^\top(\lambda-\eta)+\frac{L}{2}\|\lambda-\eta\|^2,\;\;\forall \lambda,\eta\in \mathcal N,\]
then 
\begin{equation}\label{eq:result 2}
\min_{t=1,...,n}\E\|\lambda_t-\lambda^*\|\to 0,\;\;\;n\to\infty.
\end{equation}
\end{theorem}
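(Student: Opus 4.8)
The plan is to treat Algorithm~\ref{Al:algorithm 1} as a perturbed stochastic heavy-ball scheme and to run a Lyapunov/descent argument on $\tilde{\mathcal L}$. First I would record the two geometric facts that make the natural gradient usable. From (A2), $\nabla\tilde{\mathcal L}(\lambda)^\top I_F^{-1}(\lambda)\nabla\tilde{\mathcal L}(\lambda)\ge b_3^{-1}\|\nabla\tilde{\mathcal L}(\lambda)\|^2$, so the natural gradient is a genuine descent direction; and $\|\nabla_\lambda^\text{nat}\tilde{\mathcal L}(\lambda)\|\le b_2^{-1}\|\nabla\tilde{\mathcal L}(\lambda)\|\le b_1/b_2$ by (A1)--(A2), so it is uniformly bounded. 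I would then eliminate $Y_t$ from the two-line recursion to expose the heavy-ball form $\lambda_{t+1}=\lambda_t-\gamma_t(\nabla_\lambda^\text{nat}\tilde{\mathcal L}(\lambda_t)+\epsilon_t)+\zeta_t(\lambda_t-\lambda_{t-1})$, with $\zeta_t=\alpha_t\omega\le\omega<1$ and $\gamma_t=\alpha_t(1-\omega)$; the bound $\zeta_t\le\omega<1$ is what keeps the momentum from accumulating.

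Next I would apply the descent lemma (using Lipschitz continuity of $\nabla\tilde{\mathcal L}$ on $\mathcal N$ with constant $\ell$, which I take to be part of the assumed regularity) along $\lambda_{t+1}-\lambda_t=-Y_t$, giving $\tilde{\mathcal L}(\lambda_{t+1})\le\tilde{\mathcal L}(\lambda_t)-\langle\nabla\tilde{\mathcal L}(\lambda_t),Y_t\rangle+\tfrac{\ell}{2}\|Y_t\|^2$. Writing $g_t:=\nabla_\lambda^\text{nat}\tilde{\mathcal L}(\lambda_t)+\epsilon_t$ and substituting $Y_t=\zeta_tY_{t-1}+\gamma_t g_t$ splits the inner product into (i) the good term $\gamma_t\,\nabla\tilde{\mathcal L}(\lambda_t)^\top I_F^{-1}\nabla\tilde{\mathcal L}(\lambda_t)\ge b_3^{-1}\gamma_t\|\nabla\tilde{\mathcal L}(\lambda_t)\|^2$, (ii) a momentum cross-term bounded by $b_1\zeta_t\|Y_{t-1}\|$, and (iii) a noise term bounded by $b_1\gamma_t\|\epsilon_t\|$. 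Summing over $t$, taking expectations, and using that $\tilde{\mathcal L}=-\mathcal L$ is bounded below on $\mathcal N$ (the lower bound never exceeds $\log p(y)$), I isolate $b_3^{-1}\sum_t\gamma_t\E\|\nabla\tilde{\mathcal L}(\lambda_t)\|^2$ against the three remainder sums.

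The core of the work is to show those remainders are finite. The noise sum is handled directly by (A3), since $\sum_t\gamma_t\E\|\epsilon_t\|=(1-\omega)\sum_t\alpha_t\E\|\epsilon_t\|<\infty$. For the momentum cross-term I would bound $u_t:=\E\|Y_t\|$ via $u_t\le\omega u_{t-1}+(1-\omega)\alpha_t(b_1/b_2+b_4)$, unroll the geometric recursion, and swap the order of summation; because $\alpha_t$ is nonincreasing, $\sum_{t>s}\alpha_t\omega^{t-1-s}\le\alpha_s/(1-\omega)$, which yields $\sum_t\alpha_t\E\|Y_{t-1}\|\le C\sum_t\alpha_t^2<\infty$ and hence $\sum_t\zeta_t\E\|Y_{t-1}\|<\infty$. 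A parallel recursion $\E\|Y_t\|^2\le\omega\,\E\|Y_{t-1}\|^2+(1-\omega)\alpha_t^2\,\E\|g_t\|^2$, obtained from $\zeta_t+\gamma_t=\alpha_t\le1$ and Cauchy--Schwarz, controls $\sum_t\E\|Y_t\|^2$. With every remainder finite, $\sum_t\gamma_t\E\|\nabla\tilde{\mathcal L}(\lambda_t)\|^2<\infty$; combined with $\sum_t\gamma_t=(1-\omega)\sum_t\alpha_t=\infty$ this forces $\liminf_t\E\|\nabla\tilde{\mathcal L}(\lambda_t)\|^2=0$, and Jensen's inequality upgrades this to $\min_{t\le n}\E\|\nabla\tilde{\mathcal L}(\lambda_t)\|\to0$, which is \eqref{eq:result 1}. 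For \eqref{eq:result 2}, $L$-strong convexity gives $\|\nabla\tilde{\mathcal L}(\lambda)\|\ge L\|\lambda-\lambda^*\|$, so $\min_{t\le n}\E\|\lambda_t-\lambda^*\|\le L^{-1}\min_{t\le n}\E\|\nabla\tilde{\mathcal L}(\lambda_t)\|\to0$.

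I expect the momentum bookkeeping to be the main obstacle, and two points need care. First, the quadratic descent term forces control of $\sum_t\E\|Y_t\|^2$, and through $\E\|g_t\|^2$ this brings in the \emph{second} moment $\E\|\epsilon_t\|^2$ of the estimation error, whereas the literal (A3) bounds only the first moment; I would therefore either strengthen it to $\sum_t\alpha_t^2\E\|\epsilon_t\|^2<\infty$ (natural for a regression-based estimator of finite variance) or absorb the noise part of $\|Y_t\|^2$ into the first-order noise budget. Second, the regularity conditions hold only on $\mathcal N$, so a short stability argument is needed to keep the iterates inside $\mathcal N$; this is plausible since $\|\lambda_{t+1}-\lambda_t\|=\|Y_t\|\to0$ and the energy argument discourages excursions, but it must be stated explicitly. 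The remaining estimates are routine given the nonincreasing-$\alpha_t$ summation swaps above.
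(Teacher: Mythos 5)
Your proposal follows essentially the same route as the paper's proof: a bound $\E\|Y_t\|=O(\alpha_t)$ on the momentum (the paper's Lemma~1, proved by induction rather than by unrolling the geometric recursion), a decomposition of $\langle Y_t,\nabla\tilde{\mathcal L}(\lambda_t)\rangle$ into a descent term controlled by (A2), a momentum cross-term, and a noise term controlled by (A3), followed by telescoping $\tilde{\mathcal L}$ and concluding from $\sum_t\gamma_t=\infty$; the strong-convexity step is likewise the same up to a constant. The two caveats you flag --- that the Taylor/descent-lemma remainder needs an explicit smoothness assumption and control of $\E\|Y_t\|^2$ (hence second moments of $\epsilon_t$), which (A3) does not literally supply, and that one must argue the iterates remain in $\mathcal N$ --- are points the paper's proof passes over silently (it writes the expected remainder as $O(\alpha_t^2)$ citing only the first-moment Lemma~1), so your explicit treatment of them is a refinement of the same argument rather than a deviation from it.
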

The practical implication of condition (A3) is that, as 
the natural gradient estimate $\widehat{\nabla_\lambda^\text{nat}\tilde{\mathcal L}}(\lambda_t)$ is no longer unbiased when $M<N$, it should be made more and more accurate  along with iterations $t$.
That is, one might need to gradually  increase 
the sample size $M$ in the estimate \eqref{eq: nat grat est}. 
The result in \eqref{eq:result 1} says that the algorithm does stop at a stationary point of the target $\tilde{\mathcal L}(\lambda)$. 
The result in \eqref{eq:result 2} says that, under the strong convexity condition, the algorithm converges to the minimum point $\lambda^*$.
The proof can be found in the Appendix.

Amari \cite{amari1998natural} demonstrated that (non-stochastic) natural gradient is Fisher-efficient and, since Amari's work was published, natural gradient has been shown to require far fewer iterations than other methods for many models (see, e.g., Ref.~\cite{martens2014new} and references therein). However, a rigorous treatment of the convergence speed of (stochastic or non-stochastic) natural gradient is still an active open area of research. %The frameworks available for stochastic (Euclidean) gradient descent, such as the ones used by Sweke \emph{et al.}~\cite{sweke2020stochastic} in the context of quantum-classical optimization, do not extend naturally to stochastic natural gradient descent.   

\section{Quantum Speedup of Variational Bayes}
\label{sec:QVB}

One of the major benefits of the linear regression formulation of natural gradient estimation is that it enables quantum speedup of VB.  Multiple quantum algorithms for efficient estimation of linear regression exist \citep{wiebe2014quantum,schuld2016prediction,wang2017quantum,liu2019quantum,yu2019improved,potok2021adiabatic}, under certain conditions delivering material speedup over the best classical solutions.   The quantum linear regression algorithms are based on quantum linear systems algorithms \cite{harrow2009quantum,ambainis2012variable,clader2013preconditioned,kerenidis2016quantum,wossnig2018quantum,childs2017quantum,subacsi2019quantum,gilyen2019quantum}, the most efficient of which, when provided data input in the form of quantum states, deliver exponential speedup over the best classical algorithms.   Accounting for quantum state preparation and readout, quantum computers can estimate quantum natural gradient polynomially faster than the best classical counterparts, even though the regression formulation supports efficient classical methods.  In this section, we demonstrate that, for certain commonly met conditions, where the sample size $M$ and the condition number $\kappa$ of design matrix  $T$ (the ratio of the highest and lowest singular values of $T$, $\kappa \equiv \tau_{max}/\tau_{min}$) scale as a fractional power of the model size $N$, quantum algorithms (known at the time of writing) can deliver further polynomial computational advantage.

\subsection{Quantum Natural Gradient Estimation}
\label{sec:qng estimation}

The expression for the natural gradient estimate $g = T^+h$ in (\ref{eq: nat grat est}) is a solution to the (underconstrained) linear system $Tg = h$.  Quantum linear systems algorithms, first proposed by Harrow, Hassidim, and Lloyd (HHL) \citep{harrow2009quantum}, solve the quantum analogue of the linear systems problem.  The algorithms take as an input the quantum state $\ket{h}$, which encodes the vector $h$ in amplitude encoding \cite{schuld2018supervised,lopatnikova.tran:2022:introduction}, and output the quantum state $\ket{g} \equiv \ket{T^+ h}$, which encodes the natural gradient estimate $g$ in a quantum state in amplitude encoding.  

The runtime and resource requirements of quantum linear systems algorithms depend on the dimensions $M$ and $N$ of the matrix $T$, its condition number $\kappa$ and the required precision $\epsilon$.  The runtime of HHL quantum linear systems algorithm with updated Hamiltonian simulation \cite{berry2015hamiltonian} scales as $O(M\kappa^2/\epsilon\;\text{poly}\,\log (N\kappa/\epsilon))$ \citep{childs2017quantum}.  When $M = O(\text{poly}\;\log(N))$ and $\kappa = O(\text{poly}\;\log(N))$, the HHL algorithm outputs $\ket{g}$ exponentially faster than the best classical algorithm outputs $g$.  An alternative quantum linear systems algorithm by Wossnig, Zhao, and Prakash (WZP) \cite{wossnig2018quantum} has time complexity of $O(\kappa^2 \sqrt{N} \text{poly}\;\log N/\epsilon)$ for a bounded spectral norm of $T$; this method is attractive relative to HHL when $M > O(\sqrt{N})$.  The algorithms assume sparse access to a black-box Hamiltonian oracle and an address oracle \citep{aharonov2003adiabatic} to make the entries of the design matrix $T$ available to the algorithm.  Note that the need to estimate the non-zero entries of $T$ limits the efficiency of any algorithm -- classical or quantum.  The computation of $T$ requires $O(NM)$ operations and, unless an efficient way of estimating $T$ is found (e.g.~for specific variational distributions $q_\lambda(\theta)$), $O(NM)$ bounds from below the computational complexity of natural gradient estimation by classical, quantum, or the so-called quantum-inspired methods.

When $\kappa = O(\text{poly}\,\log(N))$ and $M = O(\text{poly}\,\log(N))$ and the required precision does not have to be high, i.e. $\epsilon^{-1} =O(\text{poly}\,\log(N))$, quantum-inspired algorithms~\cite{gilyen2018quantum} can be most efficient.  Quantum-inspired algorithms are randomized classical algorithms designed to emulate some computational advantages of quantum algorithms using sampling that mimics quantum measurements \cite{frieze2004fast}.  The method's time complexity does not (directly) depend on the size of the problem $N$, but scales as $\tilde{O}(\kappa^{16} k^6\|T\|_F^6/\epsilon^6)$, where $\tilde{O}(f(x)) \equiv \tilde{O}(f(x)\text{poly}\;\log(g(x)))$, $k$ is the rank of $T$ (we can assume $k=M)$, and $\|\cdot\|_F$ denotes the Frobenius norm.

\subsection{Hybrid Quantum-Classical VB Algorithm}
\label{sec:hybrid QVB alg}

The quantum linear system algorithm provides an efficient way to estimate the natural gradient $g$, but its output is a quantum state $\ket{g}$.  To use $\ket{g}$ in VB, we face a design choice: either use $\ket{g}$ in a quantum algorithm -- i.e.~run the entire VB gradient descent algorithm on a quantum computer -- or read out the information encoded in $\ket{g}$ for use on a classical computer.  Even though readout can be a resource-intensive step, we propose to use a hybrid quantum-classical approach, illustrated in Figure~\ref{fig:algo}, where the quantum computer calculates $\ket{g}$, and the classical computer uses the information to update parameters $\lambda$, to sample from the variational distribution $q_\lambda(\theta)$, and to estimate the likelihood vector $h$ and the design matrix $T$.

\begin{figure*}[ht]
\centering
\includegraphics[scale = 0.65,trim={0 1cm 2cm 0},clip]{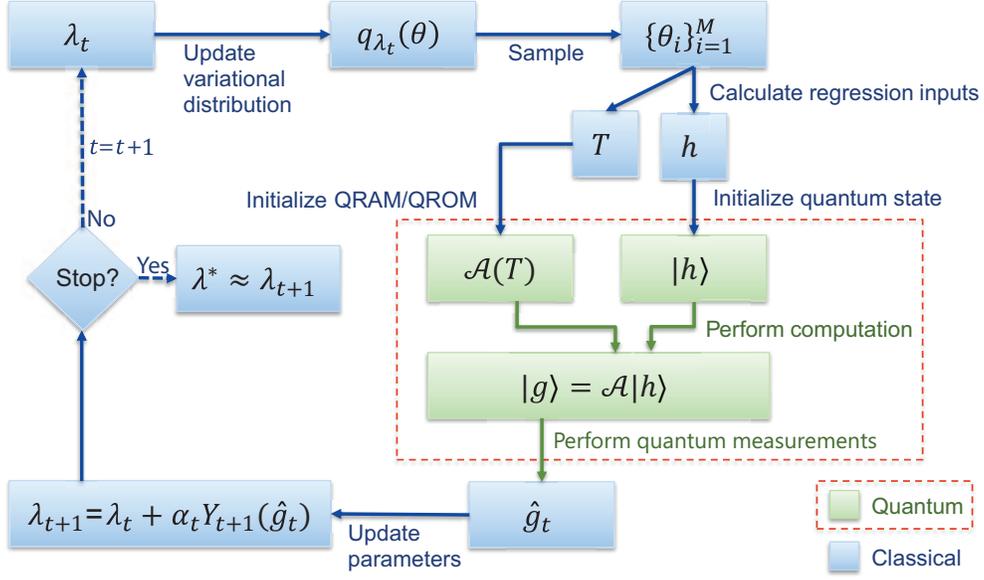}
\caption{A schematic representation of a single iteration of the hybrid quantum-classical VB algorithm.  The update of the variational parameter vector $\lambda_t$ starts in the upper left corner and proceeds clockwise.  The dashed line highlights the steps performed by a quantum computer, where $\mathcal{A}$ stands for the quantum algorithm.  The output of the quantum algorithm is an consistent estimate of the natural gradient $\hat{g}_t$.  This estimate feeds into the momentum gradient $Y_{t+1}$, defined in Eq.~(\ref{eq:mom grad}), used to update the variational parameter vector and obtain its next iteration $\lambda_{t+1}$.  If the stopping criterion is reached, the algorithm outputs $\lambda_{t+1}$ as the desired estimate of the optimal variational parameter vector; if the stopping criterion is not satisfied, the algorithm proceeds with the update of $\lambda_{t+1}$. } 
\label{fig:algo}
\end{figure*}

The hybrid quantum-classical approach is more suitable for VB gradient descent because iterative quantum algorithms are, in general, resource intensive.  Quantum transformations are linear; to create effectively non-linear transformations repeated quantum registers are used, some of which are destroyed using projective quantum measurements \citep[see, e.g.][]{gilyen2019quantum,rebentrost2019quantum}.  If the projective measurement yields the desired result with probability $p<1$ (usually $p\leq 1/2)$, the measurement destroys, on average, $1/p-1$ quantum registers.   For an iterative quantum algorithm to yield the desired result after $T$ iterations, $O(p^{-T})$ qubit registers are required.  Therefore, end-to-end quantum algorithms work for iterative computations only if a few iterative steps are sufficient. Bausch \cite{bausch2020recurrent} proposes a recurrent method with post-selection in the context of quantum Recurrent Neural Networks and argues that, for approximate post-selection it is possible to reduce this overhead.  Another possible approach is to run the iterative algorithm without post-selection for $T$ steps and then amplify the desired end state using Quantum Amplitude Amplification (QAA) \citep{brassard2002quantum}.  Even with these improvements, the exponential dependence on the number of iterations $T/z$ (where $z=2$ for QAA method) remains. For example, Rebentrost et al. \cite{rebentrost2019quantum} propose an end-to-end quantum algorithm to find the minimum of a homogeneous polynomial using Newton’s method (which leverages the density matrix exponentiation method of Lloyd \emph{et al.}  \cite{lloyd2014quantum}), suitable for situations where only a few iterative steps are required.  If an optimization requires more than a few iterative steps, a hybrid quantum-classical algorithm, such as the one proposed in this paper, is a better choice, provided it can handle the classical-quantum-classical handoff efficiently.

The hybrid quantum-classical VB algorithm with quantum natural descent is summarized in Algorithm~\ref{Al:hybrid qc algo}.   It extends the classical Algorithm~\ref{Al:algorithm 1} to embed a quantum subroutine to estimate the natural gradient.  We consider two versions of the quantum natural gradient estimator:  The first, Algorithm \ref{Al:qng algo}, includes full readout of the output state $\ket{g}$.  The second, Algorithm \ref{Al:qng-gs algo}, includes partial readout of $\ket{g}$ focused on the most significant coordinates of $g$.  We describe the subroutines in the next two subsections.

\subsection{Quantum Natural Gradient with Full Readout}
\label{sec:full_readout}

The quantum natural gradient subroutine with full readout (Algorithm \ref{Al:qng algo}) starts with the initialization of the quantum state $\ket{h}$ and making available the entries of the design matrix $T$.   Note that because $h$ is a vector of length $M$ and $M\ll N$, preparation of the input state $\ket{h}$ requires only $O(M/\log M)$ operations, negligible relative to the complexity of other parts of the algorithm.  As discussed above, the computation of the $M \times N$ entries of matrix $T$ bounds from below the time complexity of both classical and quantum algorithms for natural gradient estimation.  Once matrix $T$ is stored on a classical computer, time-efficient methods have been proposed to make it available for quantum computation either in a quantum random access memory (QRAM) \cite{giovannetti2008quantum,giovannetti2008architectures,hann2021resilience} or a classical read-only memory for quantum access (QROM) \cite{kerenidis2016quantum}.    With these inputs, the quantum computer executes a quantum linear systems algorithm (e.g.~HHL or WPZ) to perform singular value inversion and the computation of the quantum state $\ket{g}$.  The last step of the quantum subroutine is the readout of state $\ket{g}$ using a series of quantum measurements. 
\begin{algorithm}[H]
\caption{VB with Quantum Natural Gradient\label{Al:hybrid qc algo}}
\vspace{-1.5ex}
\begin{enumerate}[leftmargin=*]
\itemsep0em 
\item On a classical computer, initialize $\l_0$ and momentum gradient $Y_0$. 
\item For $t=0,1,...$and until the stopping criterion is met:
\vspace{-1.5ex}
  \begin{enumerate}[a.,leftmargin=*]
  \itemsep0em 
	  \item Generate $\theta_j\sim q_{\lambda^{(t)}}(\theta)$, $j=1,...,M$, and form the matrix $T$ and vector $h$.
	  \item Use a classical, quantum (Algorithm \ref{Al:qng algo} or \ref{Al:qng-gs algo}), or quantum-inspired algorithm to obtain the natural gradient estimate $\widehat g_t$.
	  \item Update the momentum gradient
	  \[Y_{t+1}=\omega Y_t+(1-\omega)\wh{g_t}.\label{eq:momgrad}\]
	  \item Update the variational parameter
	  \bea%\label{eq:update rule}
	  \lambda_{t+1}=\lambda_{t}+\alpha_t Y_{t+1}.
	  \eea	  
  \end{enumerate}
\end{enumerate}
\end{algorithm}

Since quantum measurement of state $\ket{g}$ in the basis $\{\ket{i}\}_{i=1}^N$ yields the state $\ket{i}$ with probability $|g_i|^2/\|g\|^2$, repeated measurements allow us to estimate $|g_i|/\|g\|$ -- the normalized absolute values of each gradient coordinate.  To extract the sign of $g_i$, an additional step is required.  
At the beginning of the quantum computation, we pull in an additional auxiliary qubit and apply a Hadamard gate \cite[see, e.g.][]{nielsen2002quantum,lopatnikova.tran:2022:introduction} to create two computational branches.  In the computational branch of the auxiliary qubit in state $\ket{0}$ we run the HHL or WPZ algorithm to create the state $\ket{g}$; in the $\ket{1}$-branch we create the uniform state $\ket{\gamma}\equiv \frac{1}{\sqrt{N}}\sum_{i=1}^{N}\ket{i}$.  The resulting state is
\begin{align}
    \frac{1}{\sqrt{2}}\ket{0}\ket{g}+\frac{1}{\sqrt{2}} \ket{1}\ket{\gamma}.
\end{align}
which is equivalent to
\begin{eqnarray*}
\frac{1}{2}\sum_{i=1}^N(g_i+\frac{1}{\sqrt{N}})\ket{+}\ket{i}+\frac{1}{2}\sum_{i=1}^N(g_i-\frac{1}{\sqrt{N}})\ket{-}\ket{i},
\end{eqnarray*}
where $\ket{\pm} \equiv \frac{1}{\sqrt{2}}(\ket{0}\pm\ket{1})$ represent Hadamard basis states \cite{nielsen2002quantum,lopatnikova.tran:2022:introduction} of the auxiliary qubit.  To streamline notation and without loss of generality, we assumed $\|g\|=1$ (we discuss the implications of measuring $g/\|g\|$ rather than $g$ at the end of this subsection).

\begin{algorithm}[H]
\caption{Quantum Natural Gradient with Full Readout\label{Al:qng algo}}
\vspace{-1.5ex}
\begin{enumerate}[leftmargin=*]
\itemsep0em 
\item Initialize registers and inputs:
\vspace{-1.5ex}
\begin{enumerate}[a.,leftmargin=*]
\itemsep0em 
\item On a classical computer, initialize two counter vectors -- length $N$ integer vectors $n^+$ and $n^-$ to $\vec{0}$.  
\item On a quantum computer, initialize a data register of $\text{ceil}(\log N)$ qubits and an auxiliary qubit.  
\item On a quantum or classical computer, initialize appropriate data structure for efficient quantum access to elements of $T$
\end{enumerate}
\item Obtain natural gradient estimate $\widehat g$:  
\vspace{-1.5ex}
\begin{enumerate}[a.,leftmargin=*]
\itemsep0em 
\item For $j=0,1,\ldots$ and until stopping criterion is met 
\begin{itemize}
\item Put the auxiliary qubit in the state $\frac{\ket{0}+\ket{1}}{\sqrt{2}}$.
\item In the data register, in the $\ket{1}$-branch of the auxiliary qubit, create a uniform state $\ket{\gamma}$.
\item In the data register, in the $\ket{0}$-branch of the auxiliary qubit:
\begin{itemize}
\item Initialize the state $\ket{h}$. 
\item Perform quantum matrix inversion using HHL or WZP algorithm. (Additional auxiliary registers and qubits are appended as needed in this step.) Obtain quantum state $\ket{g}$.
\end{itemize}
\item Perform simultaneous quantum measurement of the data register in the computational basis and the auxiliary qubit in the Hadamard basis. Receive paired outcome $[i;p]$, were $i = 0,\ldots,N-1$ and $p \in \{+,-\}$.
\item Update counter vector $n^p_i = n^p_i +1$.
\end{itemize}
\item Construct the unbiased estimate  $\widehat g$ as in \eqref{eq: nat_grad_est}.
\end{enumerate}
\end{enumerate}
\end{algorithm}

Simultaneous measurements of the register holding the superposition of $\ket{g}$ and $\ket{\gamma}$, which we call the data register, and the auxiliary register yield results that comprise an unbiased estimator of ${g}_i$.   We measure the data register in the computational basis and the auxiliary qubit in the Hadamard basis.  
The probability of measuring the $\ket{\pm}$ state
in the auxiliary qubit and $\ket{i}$ in the data register is $|g_i\pm\frac{1}{\sqrt{N}}|^2/4$. 
We perform repeated measurements and count the outcomes.
Let $n^\pm_i$ be the number of measurements of the data register that yield the state $\ket{i}$ when the measurement of the auxiliary qubit yields $\ket{\pm}$, and $n_T = \sum_{i=1}^N(n^+_i+n^-_i)$ be the total number of measurements taken.  Let $\mathbb{E}_m$ indicate expectation over measurement results.  Then,
\begin{align}
&\mathbb{E}_m (n^+_i-n^-_i) =  \nonumber \\ 
& = n_T\Big(\frac{|g_i+1/\sqrt{N}|^2}{4}-\frac{|g_i-1/\sqrt{N}|^2}{4}\Big)=n_T\frac{g_i}{\sqrt{N}},\label{eq:nat_grad_est_gs}
\end{align}
which implies that measurement results yield an unbiased estimator of $g_i$:
\begin{align}\label{eq: nat_grad_est}
    \widehat{g}_{i} = \sqrt{N}\frac{n^+_i-n^-_i}{n_T},\;\;\; \mathbb{E}_m (\widehat{g}_{i}) & = g_i. 
\end{align}
We denote by $\wh g=(\wh{g}_{1},...,\wh{g}_{N})^\top$ the quantum natural gradient which is a classical vector. 
%The following lemma summarizes the two important properties of 
%the quantum natural gradient $\wh g$:
%\begin{lemma}\label{lem:QNG properties}
%\begin{itemize}
%    \item[(i)] The quantum natural gradient is unbiased, i.e., $\E_m(\widehat{g})=\nabla_\l^\text{nat}{\mathcal L}$.
%    \item[(ii)] If the classical natural gradient is bounded, i.e. $\V(g_i)<\infty$ for all $i$, then $\V(\widehat g_{i})<\infty$ for all $i$.
%\end{itemize}    
%\end{lemma}

Technically, this algorithm results in a unbiased estimator of the normalized natural gradient, i.e. $g/\|g\|$, because of the unit norm requirement of quantum states in quantum computation. This makes quantum natural gradient suit naturally to gradient clipping (Section \ref{sec:vb_with_ng}) - a method used in stochastic gradient descent that clips off the length of the gradient vector to help stabilize the optimization \cite{Goodfellow-et-al-2016}.

\iffalse
\subsection{Convergence of VB with Quantum Natural Gradient}
\label{sec:Convergence for QNG}
This section provides a convergence analysis of the hybrid quantum-classical VB method in Algorithm \ref{Al:hybrid qc algo}.
We note that, the quantum natural gradient $\widehat{g}_t$, conditional on the $\theta$ samples, is an unbiased estimator of the classical regression-based natural gradient $g_t$. Hence, we can write $\wh g_t$ as
\[\widehat{g}_t=\nabla_\lambda^\text{nat}\tilde{\mathcal L}(\lambda_t)+\epsilon_t+u_t,\]
where $\epsilon_t$ is the estimate error as in \eqref{eq:classical NG}, and $u_t$ satisfies $\E(u_t|\mathcal F_{t})=0$ with $\mathcal F_{t}$ the $\sigma$-field generated by $\{\lambda_s,s\leq t\}$.

\begin{theorem}\label{the: main theorem QNG}
Assume that the assumptions in Theorem \ref{the: main theorem} are satisfied. Furthermore, assume that 
\[\E\|u_t\|<\infty.\]
Then the conclusions in Theorem \ref{the: main theorem} hold.
\end{theorem}
\fi

\subsection{Quantum Natural Gradient with Gauss-Southwell Rule}
\label{sec:gauss_southwell}

Full readout of $g/\|g\|$ requires $O(N/\epsilon^2)$ quantum measurements to estimate the vector $g$ within precision $\epsilon$.  Full readout, however, may not be required when estimation of top gradient coordinates (by absolute value) is sufficient for VB convergence. We propose a new method we call the quantum Gauss-Southwell rule developed specifically for quantum natural gradient estimation (Algorithm~\ref{Al:qng-gs algo}).  The method extracts one of the highest-value directions of the natural gradient estimate vector $g$ within precision $\epsilon$ using $O(\sqrt{N}/\epsilon)$ measurements.  

The quantum Gauss-Southwell rule is a quantum analogue of the Gauss-Southwell rule for efficient classical coordinate descent \cite{nutini2015coordinate}. At each step of coordinate descent, the classical Gauss-Southwell rule selects the top gradient coordinate (by absolute value).  Classical Gauss-Southwell rule can deliver efficient linear convergence even for high-dimensional models; its primary shortcoming is its computational intensity relative to randomized coordinate descent, because it requires the computation of the full gradient vector.  The quantum approach estimates the full gradient vector highly efficiently in quantum parallel, making quantum Gauss-Southwell rule efficient and able to handle \emph{natural gradient} Gauss-Southwell descent, infeasible on a classical computer.

\begin{algorithm}[H]
\caption{Quantum Natural Gradient with Gauss-Southwell Rule\label{Al:qng-gs algo}}
\vspace{-1.5ex}
\begin{enumerate}[leftmargin=*]
\itemsep0em 
\item Initialize registers and inputs:
\vspace{-1.5ex}
\begin{enumerate}[a.,leftmargin=*]
\itemsep0em 
\item On a quantum computer, initialize a data register of $\text{ceil}(\log N)$ qubits and an auxiliary qubit.  
\item On a quantum or classical computer, initialize appropriate data structure for efficient quantum access to elements of $T$
\item In the data register, initialize the state $\ket{h}$.
\end{enumerate}
\item Determine high-value gradient coordinate $i$:
\vspace{-1.5ex}
\begin{enumerate}[a.,leftmargin=*]
\itemsep0em 
\item In the data register, perform quantum matrix inversion using HHL or WZP algorithm. (Additional auxiliary registers and qubits are appended as needed in this step.)
\item Perform  quantum measurement of the data register in the computational basis. Receive paired outcome $i$, were $i = 0,\ldots,N-1$.
\end{enumerate}
\item Obtain natural gradient coordinate estimate $\widehat{g_i}$:
\vspace{-1.5ex}
\begin{enumerate}[a.,leftmargin=*]
\itemsep0em 
\item \label{step:start}Uncompute the data register
\item Put the auxiliary qubit in the state $\frac{\ket{0}+\ket{1}}{\sqrt{2}}$.
\item In the data register, in the $\ket{1}$-branch of the auxiliary qubit, create a uniform state $\ket{\gamma}$.
\item In the data register, in the $\ket{0}$-branch of the auxiliary qubit:\label{step:end}
\begin{itemize}
\item Initialize the state $\ket{h}$. 
\item Perform quantum matrix inversion using HHL or WZP algorithm. (Additional auxiliary registers and qubits are appended as needed in this step.)  Obtain quantum state $\ket{g}$.
\end{itemize}
\item Perform Quantum Amplitude Estimation with $\ket{i}\ket{+}$ as the reference state.  Receive estimate of $\frac{1}{2}\big|\frac{g_i}{\|g\|}+\frac{1}{\sqrt{N}}\big|$.
\item Repeat Steps \ref{step:start} through \ref{step:end}  
\item Perform Quantum Amplitude Estimation with $\ket{i}\ket{-}$ as the reference state.  Receive estimate of $\frac{1}{2}\big|\frac{g_i}{\|g\|}-\frac{1}{\sqrt{N}}\big|$.
\item Construct the unbiased estimate $\widehat{g_i}$ as in \eqref{eq:gs_gi}.
\end{enumerate}
\end{enumerate}
\end{algorithm}

Let $\ket{g}$ be the output state of the quantum natural gradient estimation algorithm.  We measure $\ket{g}$ in the computational basis.  The measurement yields a computational basis state $\ket{i}$ with probability $|g_{i}|^2$.  The quantum measurement naturally yields the index of one of the highest absolute-value gradient directions with high probability, analogously to the classical Gauss-Southwell coordinate descent which selects the highest absolute-value gradient direction for the parameter update.  Having determined the index $i$ of a significant gradient coordinate, we use Quantum Amplitude Estimation \cite{brassard2002quantum} with target state $\ket{i}$ to extract the value of $g_{i}$ in $O(\sqrt{N}/\epsilon)$ operations.   Applying Quantum Amplitude Estimation directly to the state $\ket{g}$ yields the absolute value $|g_i|$ rather than the signed value $g_i$.  As in the case of full readout (Section~\ref{sec:qng estimation}), in order to extract the sign of $g_i$, we append an auxiliary qubit in order to be able to estimate  $\frac{1}{2}\big|\frac{g_i}{\|g\|}\pm\frac{1}{\sqrt{N}}\big|$ as described in Algorithm~\ref{Al:qng-gs algo}:
\begin{align}
    \frac{1}{4}\Big|\frac{g_i}{\|g\|}+\frac{1}{\sqrt{N}}\Big|^2-\frac{1}{4}\Big|\frac{g_i}{\|g\|}-\frac{1}{\sqrt{N}}\Big|^2 = \frac{1}{\sqrt{N}}\frac{g_i}{\|g\|} \label{eq:gs_gi}
\end{align}

Formal proof of convergence of quantum VB with Gauss-Southwell rule is outside the scope of this work and is a topic of future research.

\subsection{Complexity of Quantum Natural Gradient Estimation Algorithms}

The hybrid quantum-classical VB algorithm provides a flexible framework that can leverage one of a number of quantum, classical, or even classical quantum-inspired \cite{tang2018quantum} algorithms for natural gradient estimation.  The key factors in determining which natural gradient estimation algorithm is to be used are the number of samples $M$ and the condition number $\kappa$ of the design matrix $T$.   The desired precision $\epsilon$ is also a driver of computational complexity, but it is not the limiting variable in the the stochastic approximation to the natural gradient.  
We compare two quantum algorithms -- HHL and WPZ, with two measurement schemes (full readout and Gauss-Southwell rule) to read off $\hat{g}$ -- with a classical singular value decomposition (SVD) algorithm and a quantum-inspired algorithm \cite{gilyen2018quantum}.    

Table~\ref{tab:complexity comparison} summarizes the time complexity of algorithms for natural gradient estimation. The time complexity of quantum algorithms takes into account quantum state preparation and readout.  Each of the algorithms we consider is the most efficient algorithm in some regime defined by the scaling of $M$ and $\kappa$ with the size of the model $N$.   Quantum algorithms are most efficient when $M = \tilde{O}(N^\alpha)$, where $0< \alpha <1$, and $\kappa < C_\kappa M$, where $C_\kappa$ is a constant.  When $\alpha=0$ and $\kappa =\tilde{O}(\text{poly}\,\log(N))$, i.e.~the matrix $T$ is sparse and well-conditioned, quantum-inspired algorithms are most efficient.  Table~\ref{tab:efficient algos} summarizes the most efficient algorithms given $M$ and $\kappa$.  The summary assumes that the scaling of all algorithms with problem size is bounded from below by the need to estimate the design matrix $T$; should more efficient ways to estimate non-zero entries of $T$ become available, quantum algorithms will become the most efficient choice across most combinations of $M$ and $\kappa$.

\begin{table}[H]
%\small 
\caption{Time complexity of pseudoinverse algorithms as a function of parameter vector size $N$, number of samples $M$, condition number $\kappa$ of the design matrix $T$ (the ratio of the highest and lowest singular values of $T$, $\kappa \equiv \tau_{max}/\tau_{min}$), the required precision $\epsilon$, and the Frobenius norm $\|T\|_F$.  Quantum algorithms by Harrow, Hassidim, and Lloyd \cite{harrow2009quantum,berry2015hamiltonian} (\emph{HHL}) and Wossnig, Zhao, and Prakash \cite{wossnig2018quantum} (\emph{WZP}) with full readout (\emph{FR}) or Gauss-Southwell rule (\emph{GR}) are compared with classical SVD algorithms (\emph{CL}) and quantum-inspired classical algorithm by Gilyen, Lloyd, and Tang \cite{gilyen2018quantum} (\emph{QI}).  \label{tab:complexity comparison}} 
\centering
 \begin{tabular}{ l l  l } 
 \hline\\[-2.0ex]
\multicolumn{2}{l}{Algorithm} & \multicolumn{1}{c}{Complexity} \\[0.5ex] 
 \hline\hline\\[-2.0ex]
 \multicolumn{3}{l}{\emph{Quantum:}}\\[0.5ex]
%\hline\\[-2.0ex]
& HHL-FR & $\max(O(\frac{NM\kappa^2}{\epsilon^3} \text{poly}\log(\frac{N\kappa}{\epsilon})),O(NM))$  \\[0.5ex]
& HHL-GS & $\max(O(\frac{\sqrt{N}M\kappa^2}{\epsilon^2} \text{poly}\log(\frac{N\kappa}{\epsilon})),O(NM))$ \\[0.5ex]
& WZP-FR & $\max(O(\frac{N^{3/2}\kappa^2}{\epsilon^3} \text{poly}\log(N)),O(NM))$ \\[0.5ex]
& WZP-GS & $\max(O(\frac{N\kappa^2}{\epsilon^2} \text{poly}\log(N)),O(NM))$ \\[1.5ex]
%\hline\\[-2.0ex]
 \multicolumn{3}{l}{\emph{Classical:}}\\[0.5ex]
\phantom{a}& CL & $\tilde{O}(NM^2)$ \\ [0.5ex]
%\hline\\[-2.0ex]
\multicolumn{3}{l}{\emph{Quantum-Inspired:}}\\[0.5ex]
%\hline\\[-2.0ex]
& QI & $\tilde{O}(M^6 \kappa^{16} \|T\|_F^6/\epsilon^6)$ \\[0.5ex]
\hline
 \end{tabular}
\end{table}

\begin{table*}[t]
%\small 
\caption{Relative efficiency of pseudoinverse algorithms across design matrix sparsity/rank and conditioning regimes. Table lists most efficient algorithm(s) for each regime.  The number of samples $M$ determines the sparsity or rank of the design matrix $T$.  The conditioning number $\kappa$ is the ratio of the maximum and minimum singular values of $T$, $\kappa = \tau_{max}/\tau_{min}$.  Scaling of $M$ and $\kappa$ with model size $N$, $M=\tilde{O}(N^\alpha)$ and $\kappa=\tilde{O}(N^\delta)$, defines regimes that favor one or a few of selected algorithms.  The scaling exponents $\alpha$ and $\delta$ equal to $0$ denote polylogarithmic dependence on $N$.  The algorithms include: Classical SVD algorithms (\emph{CL}); Quantum-inspired -- Gilyen, Lloyd, and Tang \cite{gilyen2018quantum} (\emph{QI}); Quantum -- Harrow, Hassidim, and Lloyd \cite{harrow2009quantum,berry2015hamiltonian} (\emph{HHL}) and Wossnig, Zhao, and Prakash \cite{wossnig2018quantum} (\emph{WZP}) with full readout (\emph{FR}) or Gauss-Southwell rule (\emph{GR}).  (Quantum algorithm with no readout scheme specified indicates that the algorithm is favored independently of readout method.)\label{tab:efficient algos}} 
\centering
 \begin{tabular}{lll  l l l ll} 
\hline\\[-2ex]
 & & \multirow{2}{*}{\phantom{ab}} & \multicolumn{4}{c}{$M = \tilde{O}(N^\alpha)$} & \\[0.5ex] 
\cline{4-7}\\[-2ex]
& &  & $\alpha = 0$\phantom{abcdefg} & $0<\alpha\leq 1/4$\phantom{abcd} & $1/4<\alpha\leq 1/2 $\phantom{ab} & $1/2<\alpha<1 $\phantom{ab} & \\[0.5ex] 
\hline
\hline\\[-2ex]
& \multirow{6}{*}{$\kappa=\tilde{O}(N^\delta)$\phantom{a}} & $\delta = 0$ & QI & HHL, WZP-GS \phantom{a}& HHL, WZP \phantom{a} & HHL, WZP & \\[0.5ex] 
  \cline{3-7}\\[-2ex]
& & $0<\delta\leq 1/4$ & CL, HHL-GS \phantom{a}& HHL-GS & HHL-GS, WZP-GS & HHL-GS, WZP & \\[0.5ex] 
  \cline{3-7}\\[-2ex]
& & $1/4<\delta\leq 1/2$\phantom{a} & CL & CL, HHL-GS & CL, HHL-GS  & WZP-GS & \\
& & & & & WZP-GS \phantom{a} & & \\[0.5ex]
  \cline{3-7}\\[-2ex]
& & $1/2<\delta<1 $ & CL & CL & CL & CL&\\
& & & & & & WZP-GS &\\[0.5ex]
\hline
 \end{tabular}
\end{table*}

\section{A Numerical Example: Stiefel Neural Network VB}\label{sec: numerical examples}

This section provides a numerical example that highlights the need for quantum speed up to unlock further applications in deep learning.  We demonstrate, using classical simulation, how quantum natural gradient can enable inference on a high-dimensional dataset using an expressive, but computationally demanding deep neural network VB framework.

The recent VB literature has called for the use of flexible and expressive variational distributions $q_\lambda$ that are able to approximate sufficiently well a wide range of posterior distributions. We consider such a flexible variational distribution by the following construction
\beq\label{eq:Stiefel NL}
Z_0\sim N(0,I),\;\;Z_k=\zeta_k(W_kZ_{k-1}+b_k),\;\;k=0,...,K,
\eeq
where $N(0,I)$ denotes the multivariate standard normal distribution, the $\zeta_k$ are activation functions such as tanh or sigmoid, $W_k$ and $b_k$ are coefficient matrices and vectors respectively. The distribution of $Z_K$ gives us the variational distribution $q_\lambda$ with $\lambda=(W_1,b_1,...,W_K,b_K)$. 
The construction in \eqref{eq:Stiefel NL} is a neural network with $K$ layers and can be considered as an example of normalizing flows -
a class of methods for constructing expressive probability distributions via a composition of simple bijective transformations \citep{Papamakarios:2021}.
There are two desirable properties of a normalizing flow: the transformation from $Z_{k-1}$ to $Z_k$ must be invertible and, for computational efficiency, it must be easy to compute the Jacobian determinant. To this end, we impose the following constraint on the $W_k$:
\beqn
W_k^\top W_k = I, k =1,...,K,
\eeqn
that is, $W_k$ belongs to the Stiefel manifold.
This constraint makes it easy to compute the inverse $Z_{k-1}=W_k^\top(\zeta_k^{-1}(Z_k)-b_k)$, and the Jacobian is a diagonal matrix.
We refer to the construction of probability distributions in \eqref{eq:Stiefel NL} as the Stiefel normalizing flow or the Stiefel neural network.   
As $\lambda$ has a rich geometric structure, adaptive learning methods such as Adam and AdaGrad fail to work, leaving the natural gradient method the only option for training $\lambda$.

Being a deep neural network, the Stiefel normalizing flow is expressive and can approximate a wide range of probability distributions. However, it is extremely computationally expensive to use 
this normalizing flow in VB as the number of parameters in $\lambda$ is quadratic in the number of model parameters, which poses a real challenge for the current computational technologies.

This section simulates and demonstrates the performance of quantum natural gradient for the Stiefel neural network VB in a small dimensional setting.
We make it clear upfront that this numerical example is not run on a real quantum computer; despite many hardware advances made recently, the availability of such a computer for general use is still years away. This example uses a quantum-like surrogate to test the performance of the quantum measurement step for reading off the quantum state $\ket{g}$ into a classical vector $\widehat{g}$ as in \eqref{eq: nat_grad_est}.
Given the manifold constraints of the $W_k$, we use the VB method on manifolds of Tran \emph{et al.} \cite{Tran:2019VB_manifold}. 

We run the simulated algorithm on a standalone basis instead of contrasting its efficiency against that of alternative methodologies, such as stochastic gradient descent or adaptive learning methods (e.g. Adam or AdaGrad), because these methodologies fail to work for the Stiefel neural network.  

We use the gene expression dataset, Colon,
from \url{http://www.csie.ntu.edu.tw/~cjlin/libsvmtools/datasets/binary.html}. 
This dataset has 62 observations on a binary response variable and 2000 covariates (we only use the first 100 covariates, plus an intercept, in this example).
We consider fitting a logistic regression model to this dataset, and approximate its posterior by a Stiefel neural network with $K=2$, hence, the size of $\lambda$ is $N=20,604$.
We deliberately select $K=2$ to keep 
the size of the problem numerically manageable on a classical computer;
and focus on simulating the less-explored quantum measurement step, which is the computational bottleneck of the HHL algorithm.
More precisely, we compute $ g$ classically as in \eqref{eq: nat grat est}, normalize it, then estimate the $\widehat g_i$ as in \eqref{eq: nat_grad_est}. That is, the frequency numbers $n_i^+$ and $n_i^-$ are obtained from a multinomial distribution with probabilities $|g_i\pm\frac{1}{\sqrt{N}}|^2/4$. The number of measurements used in this example is $n_T=500$, with $M=1000$ Monte Carlo samples.  Note that the present example does not really satisfy the situation $M \ll N$ because we keep $N$ relatively small due to the complexity of the Stiefel neural network and the limitations of classical computers.  The condition $M \ll N$ applies to large-scale deep learning applications where the number of parameters $N$ is hundreds of thousands or higher and $M \approx 1000$ remains sufficient.

\begin{figure}[ht]
\centering
\includegraphics[width=.5\textwidth,height=.27\textheight]{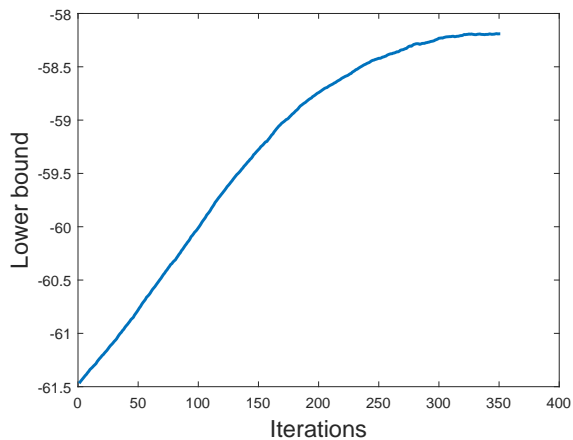}
\caption{Lower bound over iterations for Stiefel neural network VB with a classical simulation of quantum natural gradient. The smoothly-increasing lower bound indicates a well-behaved training VB procedure.   The underlying dataset has 62 observations on a binary response variable and 2000 covariates.   The present example uses the first 100 covariates, with the likelihood based on logistic regression.  Stiefel neural network with $K=2$ ($N=20,604$) provides a variational approximation of the posterior.  The classical simulation focuses on the quantum measurement step -- the computational bottleneck of the algorithm.} 
\label{fig:lower bound}
\end{figure}

Figure \ref{fig:lower bound} plots the lower bound estimates over the iterations. The smooth increasing of the lower bound objective function indicates that the VB algorithm with quantum natural gradient is converging and well behaved.

\section{Conclusion}\label{sec:conclusion}

We have proposed a computationally efficient regression-based approach to estimating the natural gradient for VB, a step towards enabling a greater range of Bayesian inference applications.   We proved that VB with the regression-based natural gradient is guaranteed to converge.  

The regression-based approach supports quantum speed up of natural gradient estimation. Quantum computing is the next-generation computing paradigm that will transform computationally-intensive spheres of academic research and industry.  But quantum computers' power can only apply to problems with specific features, such as those where the handoff of data from a classical to a quantum computer and back can be made economical, or those requiring primarily linear transformation.  We demonstrate that the regression-based approach can make nearly arbitrary VB inference problems suitable for quantum speedup.    

One of the algorithmic tools we propose is the natural gradient readout with quantum Gauss-Southwell rule, a quantum analogue of classical Gauss-Southwell rule.  An intriguing aspect of the quantum Gauss-Southwell method for natural gradient is that it is an example of useful computation infeasible on a classical computer but matter-of-course on a quantum computer.  It provides a taste for novel statistical and data science methods that will co-evolve with the development of quantum computing \cite{lopatnikova.tran:2022:introduction}.

Our work contributes to quantum machine learning: We propose a concrete practical application of quantum linear systems algorithm using a quantum regression method \cite{wiebe2012quantum,schuld2016prediction,duan2017quantum,wang2017quantum,wang2019accelerated,liu2019quantum,kerenidis2020quantum}.      Our method works for a broad class of variational models; it does not require a special-purpose learning model, as in, e.g.~\cite{low2014quantum,rebentrost2014quantum,mcclean2016theory,rebentrost2018quantum,harrow2019low,mcardle2019variational,schuld2019quantum,schuld2020circuit,abbas2020power,park2020geometry} and references therein; quantum speedup for VB with special classes of models include \cite{miyahara2018quantum,zhao2019bayesian}.  Our algorithm may help speed up quantum variational algorithms in in a similar way it helps classical VB.  A broader class of problems could benefit from linearization technique, making them candidates for quantum speedup.

We also contribute to general advances in efficient quantum gradient descent \citep{jordan2005fast,schuld2019evaluating,rebentrost2019quantum,stokes2020quantum,sweke2020stochastic,gacon2021simultaneous,rebentrost2018quantum,mitarai2018quantum,zoufal2019quantum,schuld2020circuit,zoufal2020variational}.  For example, Schuld \emph{et al.}  \cite{schuld2019evaluating} propose the evaluation of the gradient for special-form quantum variational distributions.  We use an alternative approach to avoid encoding a distribution function directly into a quantum state, which can be prohibitively expensive, requiring at least $O(N)$ qubits \cite{rebentrost2019quantum}\footnote{Consider a discretization with $d_1$ points for each dimension.  It would require an $O(h^N)$ dimensional Hilbert space and $O(N\log d_1)$ qubits.}.    

The proposed algorithm adds to the growing list of quantum computing solutions to real-life problems of high practical importance,including, for example, modelling quantum chemistry for development of new materials, medicine, agriculture, and energy (\cite{reiher2017elucidating,aspuru2018matter,cao2018potential,mcardle2020quantum,babbush2018low,macridin2018electron,ajagekar2019quantum} and references therein); modeling biological systems \cite{robert2021resource}; and option pricing \cite{rebentrost2018quantum,stamatopoulos2020option}.

\section*{Appendix}
\begin{proof}[Proof of Theorem \ref{the: main theorem}]
We first need the following lemma:
\begin{lemma}\label{lem: lemma 1}
For every $t\geq 0$, $\E\|Y_t\|\leq \gamma_t C$ with $C>0$ some finite constant. 
\end{lemma}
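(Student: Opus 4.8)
The plan is to bound $\E\|Y_t\|$ by a quantity proportional to $\gamma_t$ by induction on $t$, driving everything through the momentum recursion $Y_{t+1}=\zeta_{t+1}Y_t+\gamma_{t+1}\big(\nabla_\lambda^\text{nat}\tilde{\mathcal L}(\lambda_{t+1})+\epsilon_{t+1}\big)$. The first ingredient I would establish is a uniform bound on the natural gradient itself. Since $\nabla_\lambda^\text{nat}\tilde{\mathcal L}(\lambda)=I_F^{-1}(\lambda)\nabla\tilde{\mathcal L}(\lambda)$, submultiplicativity of the operator norm together with (A1) and (A2) gives, for any $\lambda\in\mathcal N$,
\[
\|\nabla_\lambda^\text{nat}\tilde{\mathcal L}(\lambda)\|\leq \frac{1}{\sigma_{\min}(I_F(\lambda))}\,\|\nabla\tilde{\mathcal L}(\lambda)\|\leq \frac{b_1}{b_2}.
\]

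Next I would apply the triangle inequality to the recursion, take expectations, and use (A3) to bound $\E\|\epsilon_{t+1}\|<b_4$. Writing $b_5:=b_1/b_2+b_4$, this yields
\[
\E\|Y_{t+1}\|\leq \zeta_{t+1}\E\|Y_t\|+\gamma_{t+1}\Big(\|\nabla_\lambda^\text{nat}\tilde{\mathcal L}(\lambda_{t+1})\|+\E\|\epsilon_{t+1}\|\Big)\leq \zeta_{t+1}\E\|Y_t\|+\gamma_{t+1}b_5.
\]
I would then choose the constant $C:=\max\big(\E\|Y_0\|/\gamma_0,\; b_5/(1-\omega)\big)$, which is finite because $\gamma_0=\alpha_0(1-\omega)>0$, and makes the base case $\E\|Y_0\|\leq\gamma_0 C$ immediate. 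For the inductive step, assuming $\E\|Y_t\|\leq\gamma_t C$ and substituting $\zeta_{t+1}=\alpha_{t+1}\omega$, $\gamma_t=\alpha_t(1-\omega)$, $\gamma_{t+1}=\alpha_{t+1}(1-\omega)$,
\[
\E\|Y_{t+1}\|\leq \alpha_{t+1}\omega\,\alpha_t(1-\omega)C+\alpha_{t+1}(1-\omega)b_5=\gamma_{t+1}\big(\omega\alpha_t C+b_5\big).
\]
Since $\alpha_t\leq\alpha_0\leq 1$ forces $\omega\alpha_t\leq\omega$, the choice $C\geq b_5/(1-\omega)$ gives $\omega\alpha_t C+b_5\leq \omega C+b_5\leq C$, which closes the induction and delivers $\E\|Y_{t+1}\|\leq\gamma_{t+1}C$, as required.

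The step requiring the most care is not the algebra of the induction but the implicit requirement that the iterates remain in $\mathcal N$, since the boundedness of the natural gradient and the use of (A1)--(A2) are valid only there; the argument above is therefore naturally carried out on the event that the trajectory stays within $\mathcal N$, which is exactly the regime secured by the hypothesis $\lambda_0\in\mathcal N$ together with gradient clipping. I would flag the exploitation of the monotonicity $\alpha_t\leq 1$ as the one non-obvious device that makes the geometric-type recursion contract enough for the bound $\gamma_t C$ to be self-propagating.
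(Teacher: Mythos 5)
Your proof is correct and follows essentially the same induction as the paper's: bound the natural gradient by $b_1/b_2$ via (A1)--(A2), add $b_4$ for the error term, and propagate $\E\|Y_t\|\leq\gamma_t C$ through the momentum recursion using $\zeta_{t+1}\gamma_t=\gamma_{t+1}\omega\alpha_t$ and $\alpha_t\leq 1$, with the same choice $C=\max\{\E\|Y_0\|/\gamma_0,\,(b_1/b_2+b_4)/(1-\omega)\}$ (the paper additionally bounds $\E\|Y_0\|$ by $b_1/b_2+b_4$ using the initialization $Y_0=\widehat{\nabla_\lambda^\text{nat}\tilde{\mathcal L}}(\lambda_0)$, a cosmetic difference). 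Your closing remark about the iterates needing to stay in $\mathcal N$ is a fair observation that the paper leaves implicit as well.
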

\begin{proof}[Proof of Lemma \ref{lem: lemma 1}]
We have that
\begin{eqnarray*}
\|Y_0\|&=&\|\widehat{\nabla_\lambda^\text{nat}\tilde{\mathcal L}}(\lambda_0)\|\leq\|\nabla_\lambda^\text{nat}\tilde{\mathcal L}(\lambda_0)\|+\|\epsilon_0\|\\
&\leq&\|I_F^{-1}(\lambda_0)\|\|\nabla\tilde{\mathcal L}(\lambda_0)\|+\|\epsilon_0\|\\
&=& \sigma_{\max}(I_F^{-1}(\lambda_0))\|\nabla\tilde{\mathcal L}(\lambda_0)\|+\|\epsilon_0\|.
\end{eqnarray*}
Hence,
\bean
\E\|Y_0\|\leq \frac{b_1}{b_2}+b_4=C_1\leq \gamma_0C,
\eean
with $C\geq C_1/\gamma_0$.
Assume that $\E\|Y_t\|\leq \gamma_t C$ for some $t\geq 1$. 
Then,
\begin{eqnarray*}
\E\|Y_{t+1}\|&\leq&\zeta_{t+1}\E\|Y_t\|\\
&& +\gamma_{t+1}\E(\|I_F^{-1}(\lambda_{t+1})\|\|\nabla\tilde{\mathcal L}(\lambda_{t+1})\|+\|\epsilon_{t+1}\|)\\
&\leq&\zeta_{t+1}C\gamma_t+\gamma_{t+1}C_1\\
&=&C\gamma_{t+1}(\omega\alpha_t+C_1/C).
\end{eqnarray*}
As $\omega\alpha_t<\omega<1$, with $C\geq C_1/(1-\omega)$ we have $\omega\alpha_t+C_1/C
\leq 1$ for all $t$, and hence,
\[\E\|Y_{t+1}\|\leq C\gamma_{t+1}.
\]
The proof is concluded by reduction with $C=\max\{C_1/\gamma_0,C_1/(1-\omega)\}$.
\end{proof}
We now prove Theorem \ref{the: main theorem}. 
First, 
\begin{align}\label{eq: expression 1}
\langle Y_{t+1},\nabla\tilde{\mathcal L}(\lambda_{t+1})\rangle& =\zeta_{t+1}\langle Y_{t},\nabla\tilde{\mathcal L}(\lambda_{t+1})\rangle \nonumber\\
&+\gamma_{t+1}\langle\nabla_\lambda^\text{nat}\tilde{\mathcal L}(\lambda_{t+1}),\nabla_\lambda \tilde{\mathcal L}(\lambda_{t+1})\rangle\nonumber\\
& +\gamma_{t+1}\langle \epsilon_{t+1},\nabla\tilde{\mathcal L}(\lambda_{t+1})\rangle
\end{align}
By Lemma \ref{lem: lemma 1},
\[\E|\langle Y_{t},\nabla\tilde{\mathcal L}(\lambda_{t+1})\rangle|\leq\E\|Y_{t}\|\|\nabla\tilde{\mathcal L}(\lambda_{t+1})\|\leq b_1C\gamma_t.\]
We have that,
\[\E\langle \epsilon_{t+1},\nabla\tilde{\mathcal L}(\lambda_{t+1})\rangle\leq b_1\E\|\epsilon_{t+1}\|.\]
Taking expectation of \eqref{eq: expression 1} results in
\begin{align}\label{eq: expression 2}
\E\langle Y_{t+1},\nabla\tilde{\mathcal L}(\lambda_{t+1})\rangle&=O(\alpha_{t}^2)+\gamma_{t+1}\E\|\nabla_\lambda \tilde{\mathcal L}(\lambda_{t+1})\|^2_{I_F}\nonumber\\
&+O\big(\gamma_{t+1}\E\|\epsilon_{t+1}\|\big).
\end{align}
In the above, $\|v\|_{I_F}$ denotes the Fisher-Rao metric of vector $v$.
Using Taylor's expansion,
\begin{eqnarray*}
\tilde{\mathcal L}(\lambda_{t+1})&=&\tilde{\mathcal L}(\lambda_{t})-\langle Y_t,\nabla\tilde{\mathcal L}(\lambda_t)\rangle+O(\|Y_t\|^2).
\end{eqnarray*}
By Lemma \ref{lem: lemma 1}, taking the expectation of both sides and summing over $t=0,1,...,n$,
\begin{equation}\label{eq: expression 3}
\sum_{t=0}^n \E\big(\langle Y_t,\nabla\tilde{\mathcal L}(\lambda_t)\rangle\big)=\E\tilde{\mathcal L}(\lambda_{0})-\E\tilde{\mathcal L}(\lambda_{n+1})+\sum_{t=0}^n O(\alpha_t^2).
\end{equation}
Comparing \eqref{eq: expression 2} with \eqref{eq: expression 3}, and noting condition (A3) and that $\sum \alpha_t^2<\infty$, we have that
\[\sum_{t=1}^\infty\gamma_{t}\E\|\nabla_\lambda \tilde{\mathcal L}(\lambda_{t})\|^2_{I_F}<\infty.\]
As
\[\Big(\min_{t=1,...,n}\E\|\nabla_\lambda \tilde{\mathcal L}(\lambda_{t})\|^2_{I_F}\Big)\sum_{t=1}^n\gamma_t\leq\sum_{t=1}^n\gamma_{t}\E\|\nabla_\lambda \tilde{\mathcal L}(\lambda_{t})\|^2_{I_F},\]
taking $n\to\infty$ and because $\sum_{t=1}^n\gamma_t\to\infty$, one must have that
\[\min_{t=1,...,n}\E\|\nabla_\lambda \tilde{\mathcal L}(\lambda_{t})\|^2_{I_F}\to 0,\;\;\;n\to\infty.\]
From Assumption (A2), $\|\nabla_\lambda \tilde{\mathcal L}(\lambda_{t})\|^2_{I_F}\geq \|\nabla_\lambda \tilde{\mathcal L}(\lambda_{t})\|^2/b_3$, hence
\[\min_{t=1,...,n}\E\|\nabla_\lambda \tilde{\mathcal L}(\lambda_{t})\|^2\to 0,\;\;\;n\to\infty,\]
which proves \eqref{eq:result 1}.
To proof \eqref{eq:result 2}, note that
\begin{eqnarray*}
\tilde{\mathcal L}(\lambda^*)&\geq& \tilde{\mathcal L}(\lambda_t)+\nabla \tilde{\mathcal L}(\lambda_t)^\top(\lambda^*-\lambda_t)+\frac{L}{2}\|\lambda^*-\lambda_t\|^2\\
&\geq& \tilde{\mathcal L}(\lambda_t)-\|\nabla \tilde{\mathcal L}(\lambda_t)\|\|\lambda^*-\lambda_t\|+\frac{L}{2}\|\lambda^*-\lambda_t\|^2,
\end{eqnarray*}
where we have used the Cauchy-Schwarz inequality in the second inequality. Then,
\[\|\nabla \tilde{\mathcal L}(\lambda_t)\|\|\lambda^*-\lambda_t\|-\frac{L}{2}\|\lambda^*-\lambda_t\|^2\geq \tilde{\mathcal L}(\lambda_t)-\tilde{\mathcal L}(\lambda^*)\geq 0\]
which implies
\[\|\lambda^*-\lambda_t\|\leq\frac{2}{L}\|\nabla \tilde{\mathcal L}(\lambda_t)\|.\]
Taking expectation, we have
\[\min_{t=1,...,n}\E\|\lambda^*-\lambda_t\|\leq\frac{2}{L}\min_{t=1,...,n}\E\|\nabla \tilde{\mathcal L}(\lambda_t)\|\to 0,\;\;n\to\infty.\]
\end{proof}

\iffalse
\begin{proof}[Proof of Theorem \ref{the: main theorem QNG}]
The proof is similar to that of Theorem \ref{the: main theorem}. 
It is easy to see that Lemma \ref{lem: lemma 1} still holds.
We have that
\[Y_{t+1}=\zeta_{t+1}  Y_t+\gamma_{t+1}(\nabla_\lambda^\text{nat}\tilde{\mathcal L}(\lambda_{t+1})+\epsilon_{t+1}+u_{t+1}),\]
\eqref{eq: expression 1} becomes
\begin{align}\label{eq: expression 1 Them 2}
\langle Y_{t+1},\nabla\tilde{\mathcal L}(\lambda_{t+1})\rangle& =\zeta_{t+1}\langle Y_{t},\nabla\tilde{\mathcal L}(\lambda_{t+1})\rangle \nonumber\\
&+\gamma_{t+1}\langle\nabla_\lambda^\text{nat}\tilde{\mathcal L}(\lambda_{t+1}),\nabla_\lambda \tilde{\mathcal L}(\lambda_{t+1})\rangle\nonumber\\
& +\gamma_{t+1}\langle \epsilon_{t+1},\nabla\tilde{\mathcal L}(\lambda_{t+1})\rangle\nonumber\\
& +\gamma_{t+1}\langle u_{t+1},\nabla\tilde{\mathcal L}(\lambda_{t+1})\rangle.
\end{align}
Taking the expectation of the last term
\begin{align*}
\E\langle u_{t+1},\nabla\tilde{\mathcal L}(\lambda_{t+1})\rangle=\E\Big(\nabla\tilde{\mathcal L}(\lambda_{t+1})^\top\E(u_{t+1}|\mathcal F_{t+1})\Big)=0.
\end{align*}
Hence, we obtain \eqref{eq: expression 2} and the rest of the proof follows from the proof of Theorem \ref{the: main theorem}.
\end{proof}
\fi

%\bibliographystyle{apsrev4-1}
%\bibliographystyle{plain}
\bibliography{QuantumAndMLReferences.bib}

\end{document}